\title{\LARGE \bf
RAPID: Autonomous Multi-Agent Racing using Constrained  Potential Dynamic Games
}
\author{Yixuan Jia$^{1}$, Maulik Bhatt$^{2}$, and Negar Mehr$^{2}$
\thanks{This work is supported by the National Science Foundation, under grants ECCS-2145134 CAREER Award, CNS-2218759, and CCF-2211542.}
\thanks{$^{1}$Yixuan Jia is with the Department of Electrical and Computer Engineering, University of Illinois Urbana-Champaign,
        306 N Wright St, Urbana, IL 61801, USA
        {\tt\small yixuanj3@illinois.edu}}%
\thanks{$^{2}$Maulik Bhatt and Negar Mehr are with the Department of Aerospace Engineering, University of Illinois Urbana-Champaign,
        104 S Wright St, Urbana, IL 61801, USA
        {\tt\small  mcbhatt2@illinois.edu, negar@illinois.edu }}%
}
\begin{document}

\maketitle
\thispagestyle{empty}
\pagestyle{empty}

\begin{abstract}
In this work, we consider the problem of autonomous racing with multiple agents where agents must interact closely and influence each other to compete. We model interactions among agents through a game-theoretical framework and propose an efficient algorithm for tractably solving the resulting game in real time. More specifically, we capture interactions among multiple agents through a constrained dynamic game. We show that the resulting dynamic game is an instance of a simple-to-analyze class of games. Namely, we show that our racing game is an instance of a constrained dynamic potential game. An important and appealing property of dynamic potential games is that a generalized Nash equilibrium of the underlying game can be computed by solving a single constrained optimal control problem instead of multiple coupled constrained optimal control problems. Leveraging this property, we show that the problem of autonomous racing is greatly simplified and develop RAPID (autonomous multi-agent RAcing using constrained PotentIal Dynamic games), a racing algorithm that can be solved tractably in real-time. Through simulation studies, we demonstrate that our algorithm outperforms the state-of-the-art approach. We further show the real-time capabilities of our algorithm in hardware experiments. 
\end{abstract}

\section{Introduction}
Autonomous racing is gaining popularity because of its broad applicability in various competitive and non-cooperative motion planning scenarios. Multi-agent autonomous racing is a highly challenging motion planning task, requiring multiple nonlinear agents to plan their motions in real time while operating at their limits. Furthermore, they must account for other agents with conflicting objectives and ensure safety constraints, such as avoiding collisions and staying on the track. This results in a set of coupled motion planning problems that are highly nonlinear and complex. Such complexities require efficient motion planning algorithms for real-time capabilities and to ensure safety while generating competitive trajectories.

One of the earlier approaches to autonomous racing in the context of RC cars was \cite{liniger2015optimization}, where the authors employed an optimization-based model predictive controller to maximize progress on the track, subject to safety requirements. However, due to the reactive nature of such approaches, they do not generate competitive trajectories. Learning-based approaches for autonomous racing were studied in \cite{rosolia2017autonomous, balaji2020deepracer, song2021autonomous, herman2021learn}. However, these methods mainly focus on the path-planning aspect of racing instead of interactions among the agents and may fail to generate competitive behaviors such as blocking and overtaking.

\begin{figure}[t]
    \centering
    \includegraphics[scale=0.36]{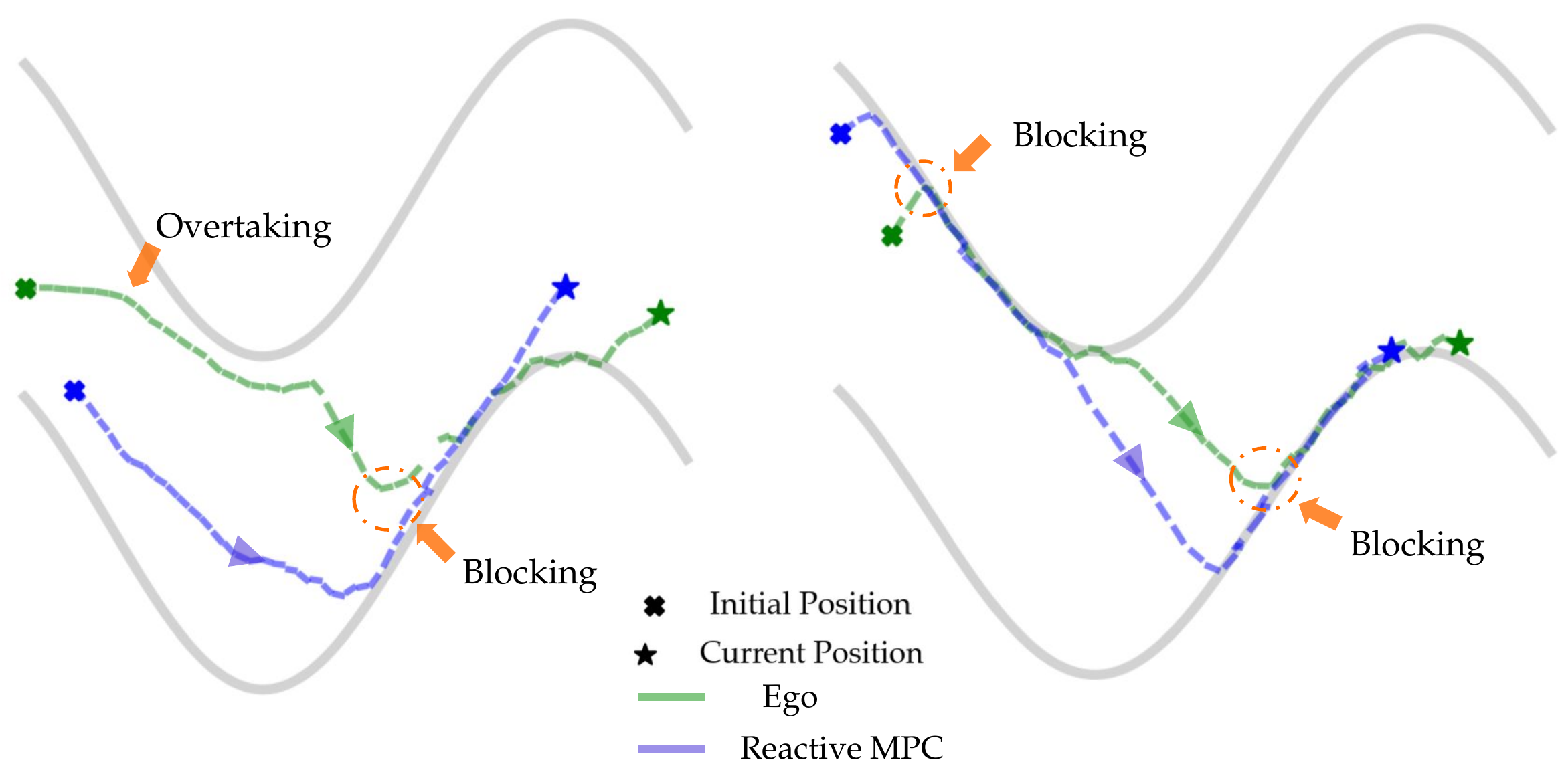}
    \caption{Visualizations of two of the hardware experiments using two Crazyflies. 
    As it can be seen in the left figure, even when the ego drone starts behind, it overtakes the opponent and ultimately blocks the opponent to avoid being overtaken. On the right, it can be seen that when the ego drone starts ahead of the opponent, it generates blocking behavior twice to avoid being overtaken.
    No collisions occurred in the hardware experiments and the ego-drone won the race in both instances.}
    \label{fig:two_player_hardware_experiment}
    \vspace{-0.5cm}
\end{figure}
Due to the interactive nature of racing, where each agent has to account for other agents' decisions, interactions among agents can naturally be captured in a game-theoretic framework. Game-theoretic planning has been extensively used in non-cooperative motion planning for multiple agents \cite{sadigh2016planning, dreves2018generalized, fridovich2020efficient, zanardi2021urban,wang2020game, mehr2023maximum}. 
Due to its success in motion planning, game theory has also recently been applied in autonomous racing. 
Non-cooperative game theoretic planners for autonomous racing involving two agents were studied in \cite{8643396, 9112709}. 
However, these works apply to only two-agent settings. 
Going beyond two agents, it is often computationally challenging to account for interactions among multiple agents due to the nonlinearities inherent in racing. 

In \cite{9329208}, a game-theoretic planner was proposed for racing among multiple agents using sensitivity-based analysis. However, the proposed algorithm does not scale well with the number of agents. A game-theoretic planner with data-driven identification of vehicle models was developed for head-to-head autonomous racing in \cite{jung2021game}. This work uses Stackelberg strategies which are not generally suitable for multi-agent racing as they normally involve a leader-follower structure.

\begin{figure*}[t]
    \centering
    \includegraphics[scale = 0.7]{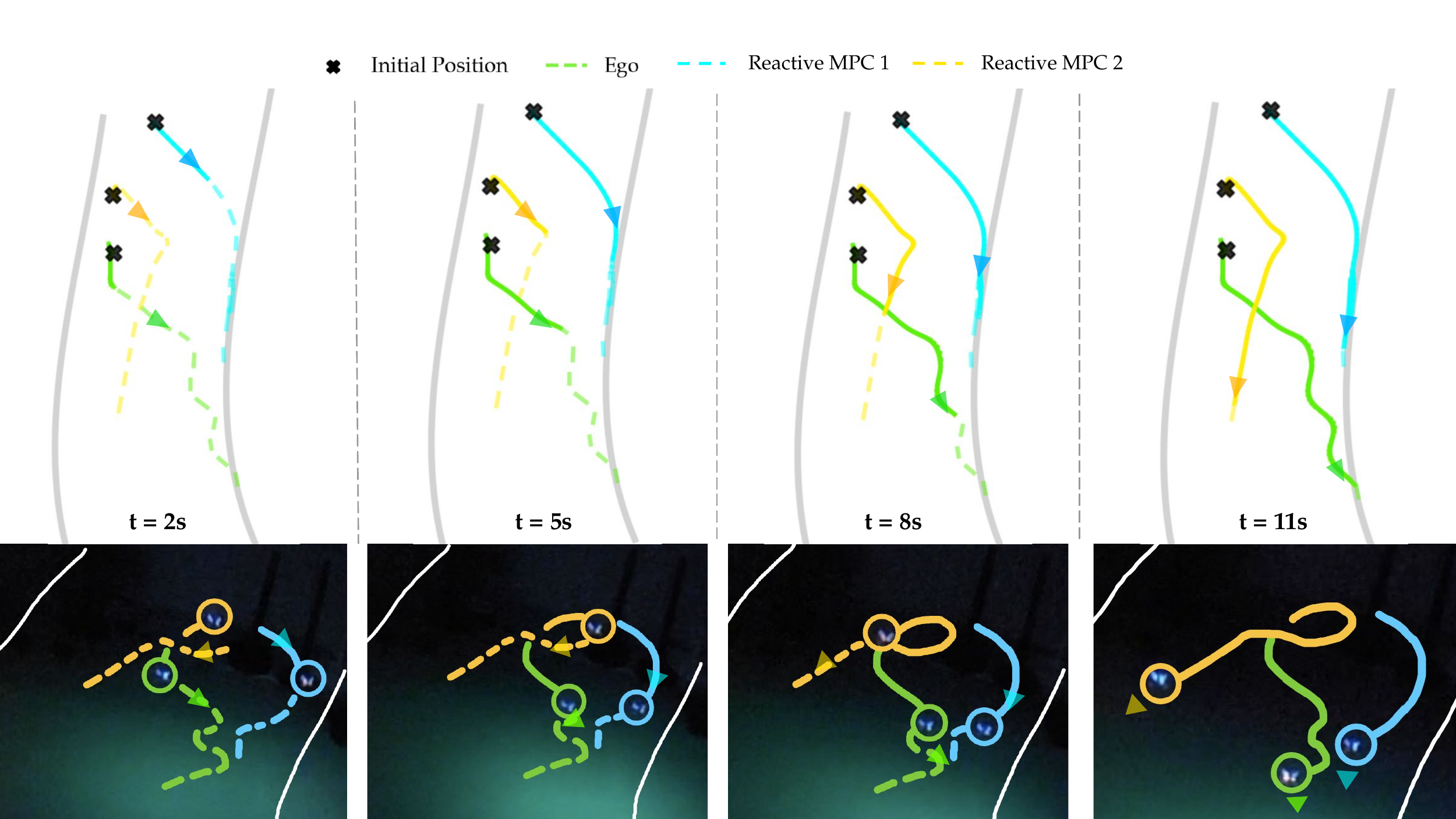}
    \caption{Experiment snapshots of the race among three quadrotors. Green color is used to represent the ego quadrotor which uses our trajectory planner RAPID and exhibits an overtaking maneuver around the two quadrotors that use reactive MPC. The top row consists of top views that are plotted with the actual hardware experiment data. The second row shows the snapshots of the actual hardware experiment. For visualization purposes, the approximated boundaries of the track are marked by the white lines on the second row. We see that the ego (green) is trying to block the agent marked with the blue color. Videos can be found at: \url{https://youtu.be/85PYCj6vUd4}.}
    \label{fig:three_player_experiment}
    \vspace{-0.5cm}
\end{figure*}

In this work, we present RAPID, an autonomous multi-agent racing algorithm that uses constrained dynamic potential games. We pose the racing problem as a non-cooperative constrained general-sum dynamic game and seek the Nash equilibria of the game. We show that although, in general, finding generalized Nash equilibria --- Nash equilibria of constrained dynamic games--- is very challenging, the equilibria of our racing game can be found tractably and efficiently. 
Our key insight is that our racing game is an instance of a \emph{dynamic potential game} for which equilibria always exist and can be found by solving a single constrained optimal control problem. The advantage of formulating the problem as a dynamic potential game is that it is generally more efficient to solve the underlying multivariate optimal control problem than to solve a set of coupled optimal control problems \cite{7448954, gonzalez2013discrete}.
We leverage this property and develop a tractable planning algorithm for racing among multiple agents. We compare our method with the state-of-the-art and demonstrate that our method beats the existing work in terms of both the computation time and the quality of the generated trajectories. We further demonstrate the real-time capabilities of our method in a hardware racing experiment involving three quadcopters.


\section{Problem Formulation}
\vspace{-0.15cm}
\subsection{Notations}
We consider the problem of autonomous racing in discrete time. Let $T \in \mathbb{N}$ denote the number of time steps in the problem, where $\mathbb{N}$ denotes the set of natural numbers. For any natural number $n$, let $\left[n\right] \coloneqq \{1, \ldots, n\}$ be the set of all natural numbers smaller than or equal to $n$. Let $\left[N\right]$, where $N \in \mathbb{N}$, denote the set of agents' indices, where each index corresponds to an agent. For each agent $i \in \left[N\right]$, agent $i$'s state is denoted as $x^i \in \mathcal{X}^i \subset \mathbb{R}^{n_i}$ where $\mathcal{X}^i$ is the state space for agent $i$ and $n_i \in \mathbb{N}$ is the dimension of state space of agent $i$. Similarly, the control action for each agent $i \in \left[N\right]$ is denoted as $u^i \in \mathcal{U}^i \subset \mathbb{R}^{m_i}$, where $\mathcal{U}^i$ is the action space for agent $i$ and $m_i\in \mathbb{N}$ is the dimension of the control space of agent $i$. 

Let $\mathcal{X} \coloneqq \mathcal{X}^1 \times \ldots \times \mathcal{X}^N$ denote the entire state space of the system and let $\mathcal{U} \coloneqq \mathcal{U}^1 \times \ldots \times \mathcal{U}^N$ be the action space of all agents participating in the race. 
We denote the state of all agents at time step $t$ as $x_t \coloneqq (x^{1^{\top}}_t, \ldots, x^{N^{\top}}_t )^\top \in \mathcal{X} \subset \mathbb{R}^{n}$, and the control action of all agents at time $t$ can be denoted as $u_t \coloneqq (u^{1^{\top}}_t, \ldots, u^{N^{\top}}_t)^{\top} \in \mathcal{U} \subset \mathbb{R}^{m}$, where $n \coloneqq \sum_{i \in [N]}n_i$ and $m \coloneqq  \sum_{i \in [N]}m_i$. Furthermore, we use $u^{-i}_t := (u^{1^{\top}}_t,\ldots,u^{{i-1}^{\top}}_t,u^{{i+1}^{\top}}_t,\ldots,u^{N^{\top}}_t)^{\top} \in \mathcal{U}^{-i}$, where $\mathcal{U}^{-i} := \Pi_{j\neq i}\mathcal{U}^j$, to denote the control actions of all agents at time step $t$ except for agent $i$. With a slight abuse of
notations, we can write $u_t = (u^{i^{\top}}_t,u^{{-i}^{\top}}_t)^\top$.


We assume that for each agent $i \in \left[N\right]$, its state at every time step $x_t^i$ evolves according to the discrete dynamics $x^i_{t+1} = f^{i}(x^i_t, u^i_t, t)$, where $f^i: \mathcal{X}^i\times  \mathcal{U}^i \times \left[N\right] \to \mathbb{R}^{n_i}$, and $x^i_t, ~u^i_t$ denote the state and action of agent $i$ at time step $t$.
We define the overall system dynamics as $f \coloneqq \left(f^1, \ldots, f^N\right)$. Then, the evolution of the joint state of the system is described by:
\vspace{-0.15cm}
\begin{equation}\label{eq:dynamics}
    x_{t+1} = f(x_t, u_t, t).
\end{equation}
 
While racing against each other, each agent has to satisfy some constraints, such as staying on the race track and avoiding collisions with obstacles and other agents. We denote the constraints of the race track through the function,
\begin{equation}\label{eq:track_constraint}
    h(x^i_t) \leq 0, \forall i \in [N], \forall t \in \{0,1,\ldots,T-1\}.
\end{equation}
We consider collision avoidance constraints which are defined as follows for every pair of agents:
\vspace{-0.15cm}
\begin{equation}\label{eq:collision_constraint}
    d(x^i_t,x^j_t) > d_{min}, \quad \forall i < j, i, j \in [N],
\end{equation}
where $d(x^i_t,x^j_t)$ is the Euclidean distance between agents $i$ and $j$ at time step $t$ and $d_{min}$ is the minimum distance between before which two agents are considered collided.
We also consider upper and lower bounds on the control inputs of each agent. For each agent $i \in [N]$, we have
\begin{equation}\label{eq:control_constraint}
    u^i_{min} \leq u^i_t \leq u^i_{max}, \forall t \in \{0,1,\ldots,T-1\}
\end{equation}
where $u^i_{min}$ and $u^i_{max}$ are the vectors of minimum and maximum control inputs for agent $i$, respectively. 
Let $g^i(x, u, t) \leq 0$ denote the concatenated vector of constraints in \eqref{eq:track_constraint}, \eqref{eq:collision_constraint} and \eqref{eq:control_constraint} for agent $i$, where $g^i: \mathcal{X}\times \mathcal{U} \times \left[N\right] \to \mathbb{R}^{c_i}$ is a vector-valued function and $c_i \in \mathbb{N}$ is the number of constraints of agent $i$. Note that the inequality is to be understood element-wise. It should also be noted that the constraint function of agent $i$, in general, depends on the states and control actions of \emph{all agents}. This results in a coupling between agents' decisions since one agent cannot satisfy their constraint function in isolation. Hence, agents must account for such couplings while choosing their control actions. We can collect constraints for all agents and define $g \coloneqq (g^{1^\top}, \ldots, g^{N^\top})^\top : \mathcal{X}\times \mathcal{U} \times \left[N\right] \to \mathbb{R}^{c}$, where $c \coloneqq \sum_{i \in [N]} c_i$.  Then the combined constraint function for the entire system is:
\begin{align}\label{eq:constraints}
 g(x_t, u_t, t) \leq 0.
\end{align}

We define $\mathcal{C}_t$ to be the set of all feasible states and actions at time $t$, i.e., the constrained subset of $\mathcal{X}\times\mathcal{U}$ which satisfies all the constraints~\eqref{eq:constraints} at time step $t$. 
We can define $\mathcal{C}_0 = \mathcal{X}\times\mathcal{U} \cap \{(x_0,u_0): g(x_0,u_0)\leq 0\}$ and $\mathcal{C}_t = \{\{\mathcal{X} \cap \{x_t: x_t = f(x_{t-1},u_{t-1},t-1)\} \}\times\mathcal{U}\}\cap \{(x_t,u_t): g(x_t,u_t)\leq 0\}$ for $t\in\{1,\ldots,T-1\}$. At the terminal time step $T$, no action is being taken, and; hence, the constraint function will only be a function of states $\left(g(x_{T},(\cdot),T)\right)$ and the corresponding constraint set will be $\mathcal{C}_{T} = \mathcal{X} \cap \{x_{T}: g(x_{T},(\cdot),T) \leq 0 \}$. 

\subsection{Agents' Strategies and Objectives}

Each agent wants to choose actions sequentially to maximize their chances of winning the race. The policy through which each agent chooses its actions is called its strategy. A strategy can have various forms. In the present scenario, we consider open-loop strategies for the agents, which means that the strategy of each agent depends only on the initial state of the system and time.\footnote{We acknowledge that considering open-loop strategies may generate different trajectories in contrast to choosing closed-loop strategies which can be a function of the system's state at every time step. However, we implement the open-loop strategies in a receding horizon fashion to account for the new information at each time step and mimic feedback strategies. We show empirically that this is a reasonable approximation for practical purposes in motion planning}. Let $\mathcal{T}= \{0,1,\ldots,T\}$. For any agent $i \in [N]$, we define the open-loop {strategy} of agent $i$, $\gamma^i : \mathcal{X} \times \mathcal{T} \to \mathcal{U}^i $ as follows:
\begin{align*}
\gamma^i(x_0, t) := u^i_t.
\end{align*}
In other words, at each time step $t$, given the initial state of the system $x_0$, $\gamma^i(x_0, t)$ would be the control action chosen by agent $i$. We use $\Gamma^i$ to denote the space of all possible strategies for agent $i$. We define $\gamma \coloneqq ({\gamma^1}^\top, \ldots, {\gamma^N}^\top)^\top \in \Gamma$,
where $\Gamma \coloneqq \Gamma^1 \times \ldots \times \Gamma^N$
is the combined strategy space of the system. Let $\Gamma^{-i}:= \Pi_{j\neq i}\Gamma^j$ denote the strategy space of all agents except agent $i$. Same as before, we use $\gamma^{-i}:=({\gamma^1}^\top,\dots,{\gamma^{i-1}}^\top,{\gamma^{i+1}}^\top,\ldots,{\gamma^N}^\top)^\top \in \Gamma^{-i}$ to denote the strategy and 3strategy space of all agents except the agent $i$. It should be noted that open-loop strategies provide equivalence between strategy and control actions at all time instants ($\gamma \equiv \{u_t\}_{\{1,\ldots,T-1\}}$), which in turn determines the states of the whole system at all time steps given the initial state.

\vspace{-0.04cm}
When racing, each agent has their own objective of finishing the race before the opponents while satisfying some constraints such as being on the race track and avoiding collisions with other agents. As discussed before, let these constraints be denoted by $g(x_t,u_t,t) \leq 0$. The agents' objectives should incentivize behaviors such as moving as fast as possible along the race track and blocking the other agents if they are trying to overtake. Let the objective function of each agent be denoted by a function $J^i: \mathcal{X}\times\Gamma \rightarrow \mathbb{R}$ which is a function of the initial state of the game and the strategies of all the agents. Taking inspiration from \cite{9329208}, for each agent $i$, we choose $J^i$ to be of the form
\begin{equation}\label{eq:cost}
    J^i(x_0,\gamma) = -r(x^i_{T}) + \alpha \sum_{t=0}^{T-1} \sum_{\substack{j \in [N] \\ j \neq i}} d(x^i_t,x^j_t)^2,
\end{equation}
where $r(\cdot)$ is the path covered along the race track, and $\alpha$ is a hyperparameter that determines the amount of weight given to generating blocking behavior with respect to moving along the track as fast as possible. We choose the objective functions to be of the form in \eqref{eq:cost} with the following motivations:
\begin{itemize}
    \item For a fixed planning horizon $T$, each agent wants to maximize the path covered along the race track, which corresponds to minimizing $-r(x^i_T)$.
    \item We take motivation from the sensitivity terms introduced in \cite{9329208} to incentivize competitive behaviors. Each agent wants to prevent other agents from overtaking themselves and remain in close proximity to their opponents who are ahead of them. Our method incentivizes blocking behaviors if the agent using our method is ahead of other agents. This is because, while the first term in \eqref{eq:cost} encourages the agent to progress along the track, the second term encourages being in proximity to other agents, which generates blocking behavior if the agent using our method is ahead.
    On the other hand, if the opponents are ahead, minimizing the distance to opponents is equivalent to catching up. Furthermore, in such cases, due to the first term, our method will incentivize the agent not only to catch up but also to overtake.
    \item Note that the $\alpha$ term in \eqref{eq:cost} indicates the aggressiveness of agents, as a larger value of $\alpha$ will incentivize agents to be in proximity with other agents. It should also be noted that, due to the collision constraints, the distance between each pair of agents has a lower bound, which guarantees the safety of the generated trajectories.
\end{itemize}
When racing, each agent wants to win the race, which translates to minimizing its objective. However, each agent cannot minimize its objective and satisfy its constraints in isolation, as the value of its objective depends on the actions and states of other agents as well. Therefore, none of the agents can independently minimize their objectives as the objectives and constraints are inherently coupled. Consequently, to capture the dependence of agents upon one another, they must seek equilibria of the dynamic game underlying their interactions.

We denote such a non-cooperative general-sum constrained dynamic game in a compact form as $\mathcal{G} := \left([N], \{\Gamma_i\}_{i\in[N]}, \{J_i\}_{i\in[N]},\{\mathcal{C}_k\}_{k\in\{0,\ldots,T\}},f,x_0 \right)$. We model the outcome of this competitive racing scenario by open-loop constrained Nash equilibria, also known as generalized Nash equilibria, of the constrained dynamic game $\mathcal{G}$. 
\begin{definition}
A set of strategies $\gamma^*$ is a generalized Nash equilibrium of the game $\mathcal{G} = \left([N], \{\Gamma_i\}_{i\in[N]}, \{J_i\}_{i\in[N]},\{\mathcal{C}_k\}_{k\in\{0,\ldots,T\}},f,x_0 \right)$ if the following holds for each agent $i \in [N]$:
\begin{align}\label{eq:Nash-definition}
    & J^i(x_0,\gamma^{*}) \leq   J^i(x_0,{\gamma^{i}},{\gamma^{-i}}^*) \nonumber\\
    & \forall \left(x_t,{\gamma^{i}}(x_0,t),{\gamma^{-i}}^*(x_0,t)\right) \in \mathcal{C}_t, t \in \{0,\ldots,T-1\}, \nonumber \\
    & \forall \; x_{T} \in \mathcal{C}_{T}.
\end{align}
\end{definition}
This definition essentially means that no agent would want to deviate from their equilibrium strategy to any other feasible strategy as it would result in incurring a higher cost. It is important to notice that finding a solution to \eqref{eq:Nash-definition} will require solving a set of N-coupled constrained optimal control problems, which is challenging to solve tractably in practice. In the next section, we will discuss how we can efficiently find the equilibria of the game tractably in real time.

\section{Dynamic Potential Games}
Dynamic potential games are a class of games for which Nash equilibria can be found by solving a single optimal control problem instead of having to solve several coupled optimal control problems. If a game is a potential game, there exists a potential function $P$, and the Nash equilibria of the game can be computed by minimizing this potential function. This largely simplifies the computation of Nash equilibria. This property of potential games has been recently leveraged in the context of multi-agent navigation and has been shown to simplify the problem of trajectory planning significantly~\cite{kavuncu2021potential,williams2023distributed}. We formally define dynamic potential games as follows:

\begin{definition}\label{def:potential}
A non-cooperative constrained dynamic game $\mathcal{G} := \left([N], \{\Gamma_i\}_{i\in[N]}, \{J_i\}_{i\in[N]},\{\mathcal{C}_k\}_{k\in\{0,\ldots,T\}},f,x_0 \right)$ is a constrained potential dynamic game if there exists a potential function $P:\astates\times\Gamma \rightarrow \bR$ such that for every agent $i\in [N]$, and every pair of strategies $\gamma^i \in \Gamma^i,\nu^i \in \Gamma^i$, once we fix the set of strategies $\gamma^{-i} \in \Gamma^{-i}$, we have:
\begin{align}\label{eq:potential-condition}
    & J^i(x_0,\gamma^i,\gamma^{-i}) - J^i(x_0,\nu^i,\gamma^{-i}) \nonumber\\ &\quad = P(x_0,\gamma^i,\gamma^{-i}) - P(x_0,\nu^i,\gamma^{-i}).
\end{align}
\end{definition}

Definition~\ref{def:potential} essentially states that a game $\mathcal{G}$ is a constrained dynamic potential game if there exists a global potential function that captures the change in the cost of all agents when they change their strategy while other agents keep their strategies fixed.



\begin{proposition}
If a game is a constrained dynamic potential game, it can be shown that generalized open-loop Nash equilibria of the dynamic potential game $\mathcal{G}$ can be computed by solving the following multivariate optimal control problem: 
\begin{align}\label{mopc}
      \underset{\gamma \in \Gamma}{\text{minimize}} \quad & P(x_0,\gamma) \nonumber\\
     \text{subject to} \quad& x_{k+1} = f(x_k,u_k,k), \; x_0 \; \text{given} \nonumber \\
    & g(x_k,u_k,k) \leq 0.
\end{align}
\end{proposition}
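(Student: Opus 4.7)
The plan is to show that any solution $\gamma^*$ of the single multivariate optimal control problem \eqref{mopc} automatically satisfies the generalized Nash equilibrium condition \eqref{eq:Nash-definition}, by using the potential function identity \eqref{eq:potential-condition} to translate inequalities on $P$ into inequalities on each $J^i$.

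First I would fix an arbitrary solution $\gamma^* \in \Gamma$ of \eqref{mopc} and pick an arbitrary agent $i \in [N]$. Let $\nu^i \in \Gamma^i$ be any alternative strategy for agent $i$ such that the joint profile $(\nu^i, \gamma^{-i*})$ is feasible, meaning the state trajectory generated by $f$ from $x_0$ under controls $(\nu^i, \gamma^{-i*})$ satisfies $g(x_k, u_k, k) \leq 0$ for all $k \in \{0, \ldots, T-1\}$ and the terminal constraint in $\mathcal{C}_T$. Since $\gamma^*$ is a global minimizer of $P$ over the feasible set of \eqref{mopc}, and $(\nu^i, \gamma^{-i*})$ lies in that feasible set, we immediately get
\begin{equation*}
    P(x_0, \gamma^{i*}, \gamma^{-i*}) \leq P(x_0, \nu^i, \gamma^{-i*}).
\end{equation*}

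Next I would apply the defining property \eqref{eq:potential-condition} of a constrained dynamic potential game, with the role of $\gamma^i$ played by $\gamma^{i*}$ and the other strategies fixed at $\gamma^{-i*}$. This gives
\begin{equation*}
    J^i(x_0, \gamma^{i*}, \gamma^{-i*}) - J^i(x_0, \nu^i, \gamma^{-i*}) = P(x_0, \gamma^{i*}, \gamma^{-i*}) - P(x_0, \nu^i, \gamma^{-i*}) \leq 0,
\end{equation*}
which is exactly $J^i(x_0, \gamma^*) \leq J^i(x_0, \nu^i, \gamma^{-i*})$. Since $i$ and $\nu^i$ were arbitrary (subject to feasibility of the joint profile), this establishes \eqref{eq:Nash-definition}, so $\gamma^*$ is a generalized open-loop Nash equilibrium.

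The main subtlety, and the step I would be most careful about, is precisely matching the feasibility sets: the generalized Nash definition restricts each agent's unilateral deviations to strategies that keep the coupled state-input pair in $\mathcal{C}_t$ for every $t$, while \eqref{mopc} optimizes jointly over all profiles satisfying the same constraints. I would make explicit that the set of unilateral deviations admissible in \eqref{eq:Nash-definition} is contained in the feasible set of \eqref{mopc} (with $\gamma^{-i}$ held at $\gamma^{-i*}$), so that the minimizer property of $\gamma^*$ in \eqref{mopc} immediately yields the required inequality on $P$. After that, the translation from $P$ to $J^i$ via the potential identity is essentially algebraic, and the conclusion follows. I would close by noting that existence of the minimizer in \eqref{mopc} (and hence of a generalized Nash equilibrium) is guaranteed whenever $P$ is continuous and the joint feasible set is nonempty and compact, though these regularity conditions lie outside the scope of the statement itself.
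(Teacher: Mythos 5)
Your proof is correct. The paper does not give its own argument for this proposition---it simply cites Theorem~1 of \cite{bhatt2022efficient}---and what you have written is precisely the standard argument that reference supplies: a global minimizer of $P$ over the joint feasible set of \eqref{mopc} cannot be improved by any unilateral feasible deviation, and the potential identity \eqref{eq:potential-condition} converts that inequality on $P$ into the Nash inequality on each $J^i$. You also correctly identify the one point that needs care, namely that the set of feasible unilateral deviations in \eqref{eq:Nash-definition} is contained in the feasible set of \eqref{mopc} with $\gamma^{-i}$ frozen at $\gamma^{-i*}$, so nothing is missing.
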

\begin{proof}
Proof can be found in Theorem-1 from~\cite{bhatt2022efficient}.
\end{proof}

\section{RAPID: Scalable Planner for Racing}
In this section, we show that our proposed multi-agent racing game is an instance of a dynamic constrained potential game. We will prove that under cost structures~\eqref{eq:cost}, the resulting constrained dynamic game will be a dynamic potential game whose generalized Nash equilibria can be found by solving one single constrained optimal control problem. The following theorem characterizes the cost structures under which the game $\mathcal{G}$ will be a constrained dynamic potential game.

\begin{theorem}\label{thm:1}
    A game $\mathcal{G}$ where every agent optimizes \eqref{eq:cost} with constraints given in \eqref{eq:track_constraint}, \eqref{eq:collision_constraint} and \eqref{eq:control_constraint}, is a constrained dynamic potential game with the potential function
    \begin{equation}\label{eq:potential-function}
        P(x_0,\gamma) = -\sum_{i=1}^{N}r(x^i_{T}) + \alpha \sum_{t=0}^{T-1} \sum_{i=1}^{N}\sum_{j=i+1}^{N} d(x^i_t, x^j_t)^2.
    \end{equation}
     A generalized Nash equilibrium of $\mathcal{G}$ can be computed by solving the following single optimal control problem
\begin{align}\label{eq:potential_game}
      \underset{\gamma \in \Gamma}{\text{minimize}} \quad & P(x_0, \gamma)\\
     \text{subject to} \quad& u_{min} \leq u^i_t  \leq u_{max}, \quad \forall i \in [N], \forall t \in [T-1]; \nonumber\\
     & d(x^i_t, x^j_t) > d_{min}, \quad \forall i \neq j \in [N], \forall t \in \left[T\right]; \nonumber\\
     & h(x^i_t) \leq 0,  \quad \forall i \in [N], \forall t \in \left[T\right].
\end{align}
\end{theorem}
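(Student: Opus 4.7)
The goal is twofold: (i) verify that the candidate $P$ in \eqref{eq:potential-function} satisfies the potential-game identity \eqref{eq:potential-condition} under the cost structure \eqref{eq:cost}; and (ii) invoke the Proposition from Section III to conclude that any minimizer of \eqref{eq:potential_game} is a generalized open-loop Nash equilibrium of $\mathcal{G}$. Part (ii) is immediate once (i) is in place, so the work is concentrated in (i).

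\textbf{Step 1: Exploit the decoupled dynamics.} Recall from \eqref{eq:dynamics} that $x^i_{t+1}=f^i(x^i_t,u^i_t,t)$ depends only on agent $i$'s own state and action. Fix any profile $\gamma^{-i}\in\Gamma^{-i}$ and any two strategies $\gamma^i,\nu^i\in\Gamma^i$. Let $\{\tilde x^i_t\}$ and $\{\bar x^i_t\}$ denote the state trajectories produced by $\gamma^i$ and $\nu^i$ respectively (from the same $x_0$), while $\{x^j_t\}_{j\neq i}$ is identical under the two strategy profiles. This observation is the crucial structural fact that makes the potential identity reduce to a purely algebraic check.

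\textbf{Step 2: Compute the left-hand side of \eqref{eq:potential-condition}.} Plugging the two profiles into \eqref{eq:cost} and taking the difference, the coupling term collapses because only agent $i$'s trajectory changes:
\begin{align*}
J^i(x_0,\gamma^i,\gamma^{-i})-J^i(x_0,\nu^i,\gamma^{-i}) &= -\bigl(r(\tilde x^i_T)-r(\bar x^i_T)\bigr) \\
&\quad + \alpha\sum_{t=0}^{T-1}\sum_{j\neq i}\bigl[d(\tilde x^i_t,x^j_t)^2 - d(\bar x^i_t,x^j_t)^2\bigr].
\end{align*}

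\textbf{Step 3: Compute the right-hand side and match.} In $P$, only the $i$-th terminal reward and the pairwise distances involving index $i$ change when agent $i$ switches from $\nu^i$ to $\gamma^i$. Splitting the double sum $\sum_{i<j}$ at index $i$ into contributions $\sum_{j>i}$ and $\sum_{j<i}$ and using the symmetry $d(x^i_t,x^j_t)=d(x^j_t,x^i_t)$, the $N-1$ surviving pairs reassemble into exactly $\sum_{j\neq i}d(x^i_t,x^j_t)^2$. Thus
\begin{align*}
P(x_0,\gamma^i,\gamma^{-i})-P(x_0,\nu^i,\gamma^{-i}) &= -\bigl(r(\tilde x^i_T)-r(\bar x^i_T)\bigr) \\
&\quad + \alpha\sum_{t=0}^{T-1}\sum_{j\neq i}\bigl[d(\tilde x^i_t,x^j_t)^2 - d(\bar x^i_t,x^j_t)^2\bigr],
\end{align*}
which matches Step 2 term by term, establishing \eqref{eq:potential-condition}.

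\textbf{Step 4: Conclude via the Proposition.} Since the constraints \eqref{eq:track_constraint}, \eqref{eq:collision_constraint}, \eqref{eq:control_constraint} are exactly the constraints defining the feasible sets $\{\mathcal{C}_k\}$ of $\mathcal{G}$, and since we have just shown $\mathcal{G}$ is a constrained dynamic potential game with potential $P$, the Proposition in Section III directly implies that any solution of \eqref{eq:potential_game} is a generalized open-loop Nash equilibrium of $\mathcal{G}$. The only subtlety to flag is that the symmetry of $d$ and the per-agent structure of the dynamics are both necessary for the pair-count bookkeeping in Step 3; if the running cost involved an agent-asymmetric coupling (e.g.\ $d(x^i_t,x^j_t)$ weighted differently by $i$), the same $P$ would no longer work. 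The main obstacle is therefore the combinatorial matching in Step 3, which is routine but is the only place where the $\sum_{j\neq i}$ form in \eqref{eq:cost} has to be reconciled with the $\sum_{i<j}$ form in \eqref{eq:potential-function}.
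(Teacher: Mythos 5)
Your proof is correct and follows essentially the same route as the paper's: both exploit the separable dynamics to fix all trajectories except agent $i$'s, compute the cost difference, and then split the $\sum_{i<j}$ double sum in $P$ so that the surviving pairs involving index $i$ reassemble (via symmetry of $d$) into the $\sum_{j\neq i}$ coupling term of $J^i$, before invoking the Proposition to pass from the potential identity to the single optimal control problem. No gaps; your closing remark about the necessity of the symmetric, unweighted coupling is a fair observation but not needed for the proof itself.
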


\begin{proof}
    See Appendix \ref{appdix:pf1}.
\end{proof}

Note that Theorem~\ref{thm:1} indicates that to find the equilibria of the games, it suffices to solve~\eqref{eq:potential_game}. Since~\eqref{eq:potential_game} is a single constrained optimal control problem, one can use any existing constrained trajectory optimizer to solve it. 

Our overall algorithm is described in Algorithm~\ref{alg:ego}, where ego represents the agent that uses our algorithm. 

Note that while the $\alpha$ term in Algorithm~\ref{alg:ego} stays the same during each planning horizon, its value can be different between planning horizons. Since the $\alpha$ value indicates the aggressiveness of agents, we introduce the idea of \emph{adaptive aggressiveness scaling}, which scales $\alpha$ depending on the average distance of ego to other agents. 

In line 1 of Algorithm~\ref{alg:ego}, we sum over the squared distance of the ego agent to other agents. When the average squared distance of ego to other agents is above some threshold $D$, we use a smaller $\alpha$ value ($\alpha_{inactive}$ in line 2). Otherwise, a larger value ($\alpha_{active}$ in line 4) is used when ego is in proximity with other agents. When ego is far from others, the sense of being ``aggressive" is not of relevance and thus the aggressiveness term is inactive. When ego is in proximity with other agents, having a larger $\alpha$ value indicates the aggressiveness term is active and ego will try to catch up with the agent ahead or block the agent behind, depending on the relative positions, which generates overtaking and blocking maneuvers. 
\vspace{-0.2cm}
\begin{algorithm}
\caption{RAPID with Adaptive Aggressiveness Scaling}\label{alg:ego}
\hspace*{\algorithmicindent} \textbf{Input:}
all agents' current states $\{{x}^{i}\}_{i \in [N]}$, active distance threshold $D$, alpha inactive $\alpha_{inactive}$, alpha active $\alpha_{active}$.
\begin{algorithmic}[1]
\If {$\sum_{i\in [N]}\|d({x}^{ego}, {x}^{i}) \|^2 > (N-1)D$} 
	\State  $\alpha = \alpha_{inactive}$
\Else
	\State $\alpha = \alpha_{active}$
\EndIf
\State solve \eqref{eq:potential_game} with $\alpha$.
\end{algorithmic}
\end{algorithm}

\vspace{-0.4cm}
\section{Simulation Results}
In this section, we demonstrate the performance of our algorithm by considering two agents racing against each other.

We use a discrete variation of the Dubins car model for each agent. The state vector of each agent $i \in [N]$ at time step $t$ is:
\vspace{-0.2cm}
\begin{align}
    x^i_t = \left[p^i_{x,t}, p^i_{y,t}, v^i_t, \theta^i_t\right]
\end{align}
where $p^i_{x,t}, p^i_{y,t}$ represent the $x, y$ coordinates of the agent in the world frame and $v^i_t, \theta^i_t$ represent its speed and orientation. 
Each agent's dynamics is given by:
\begin{align*}
    {x}^i_{t+1} = 
    \begin{bmatrix}
        p^i_{x,t} + v^i_t\text{cos}(\theta^i_t)\delta{t}\\
        p^i_{y,t} + v^i_t\text{sin}(\theta^i_t)\delta{t}\\
        v^i_t + a^i_t \delta{t}\\
        \theta^i_t + \omega^i_t \delta{t}
    \end{bmatrix}
\end{align*}
where $a^i_k, \omega^i_t$ are the acceleration and angular velocity (control inputs) of agent $i$ at time step $t$, and $\delta{t}$ represents the time step length.
During the start of each planning horizon, all agents have access to the ground truth of their own current states as well as the current states of other agents.
For simplicity, we call the agent that uses our algorithm the {\it ego}. We compare our algorithm with the racing algorithm developed in~\cite{9329208}, which will be called the {\it baseline}. The baseline algorithm used iterative best responses and was shown to be able to generate competitive behaviors.

We solve both the ego's optimization problem and the baseline's optimization problem with do-mpc
\cite{LUCIA201751}, which models the problems symbolically with CasADi \cite{Andersson2019} and solves them with IPOPT. \cite{Wachter:2006wt}. We choose the planning horizon to be 5 time steps for each agent, where each time step corresponds to $0.1s$.

The race track that we used is shown in Fig \ref{customized_track}.
\begin{figure}
    \centering
    \includegraphics[scale=0.4]{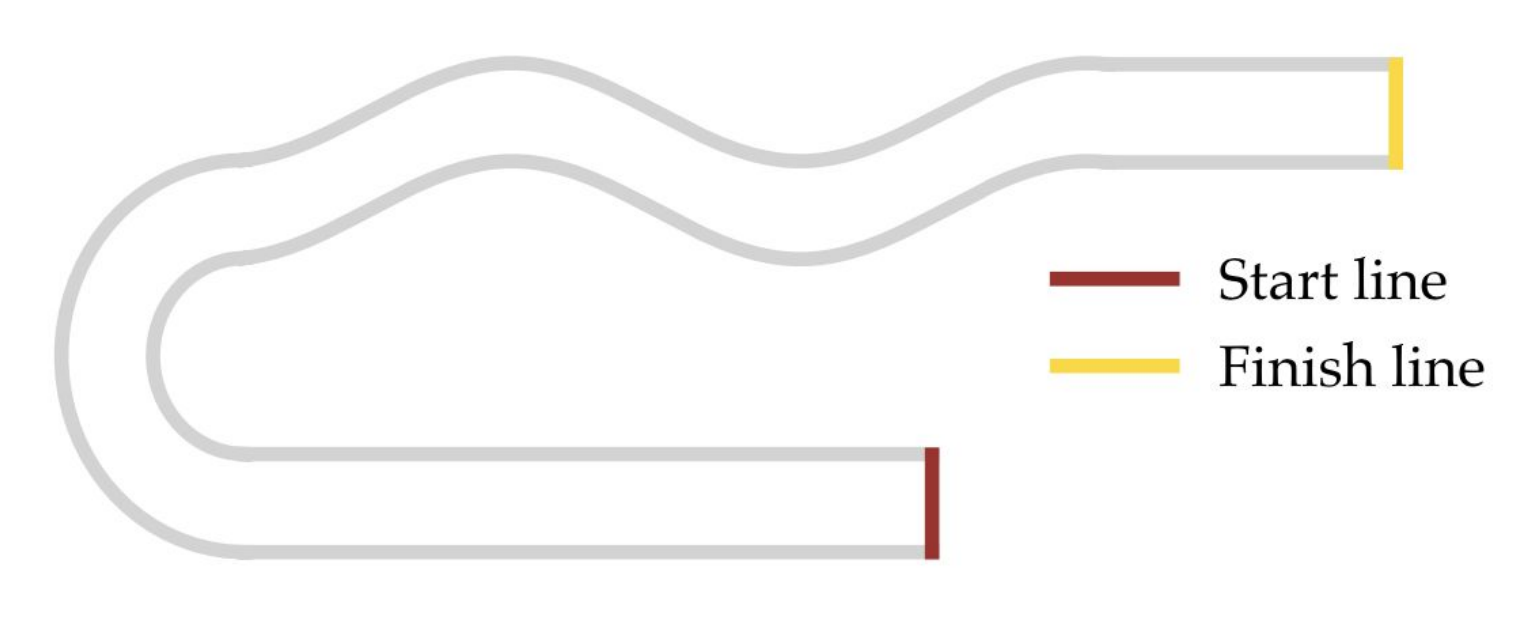}
    \caption{The top view of the customized track. The brown line is the starting line, and the gold line is the finish line.}
    \label{customized_track}
    \vspace{-0.6cm}
\end{figure}
The start and finish lines are marked with brown and gold, respectively. 
We run the experiments between the ego and the baseline for 50 random initial positions.
The initial positions are randomly sampled within a fixed area around the starting line so that the initial distance between the two agents is within $\left[1.0, 1.5\right] m$ . The agent with a leading starting position is given a maximum speed of $2.4 ~m/s$, and the other agent is given a maximum speed of $2.5 ~m/s$. Note that we chose this setup to ensure enough interactions between the two agents.

The results are demonstrated in Table \ref{hard_track_sim_stats}.
Note that the solve time here represents the time that it takes the agent to solve its action for one time step.
As shown in Table~\ref{hard_track_sim_stats}, our algorithm has a significant advantage over the baseline in terms of both the solve time and the quality of the generated trajectories. Our algorithm is about 4 times faster than the baseline while winning the race more frequently. We find that the ego is able to push the baseline against the outer boundary of the track during the first corner, and the baseline would be stuck at the boundary until the ego passes. This blocking behavior is plotted in Fig \ref{hard_track_push}. 

\begin{figure}
    \centering
    \includegraphics[scale=0.36]{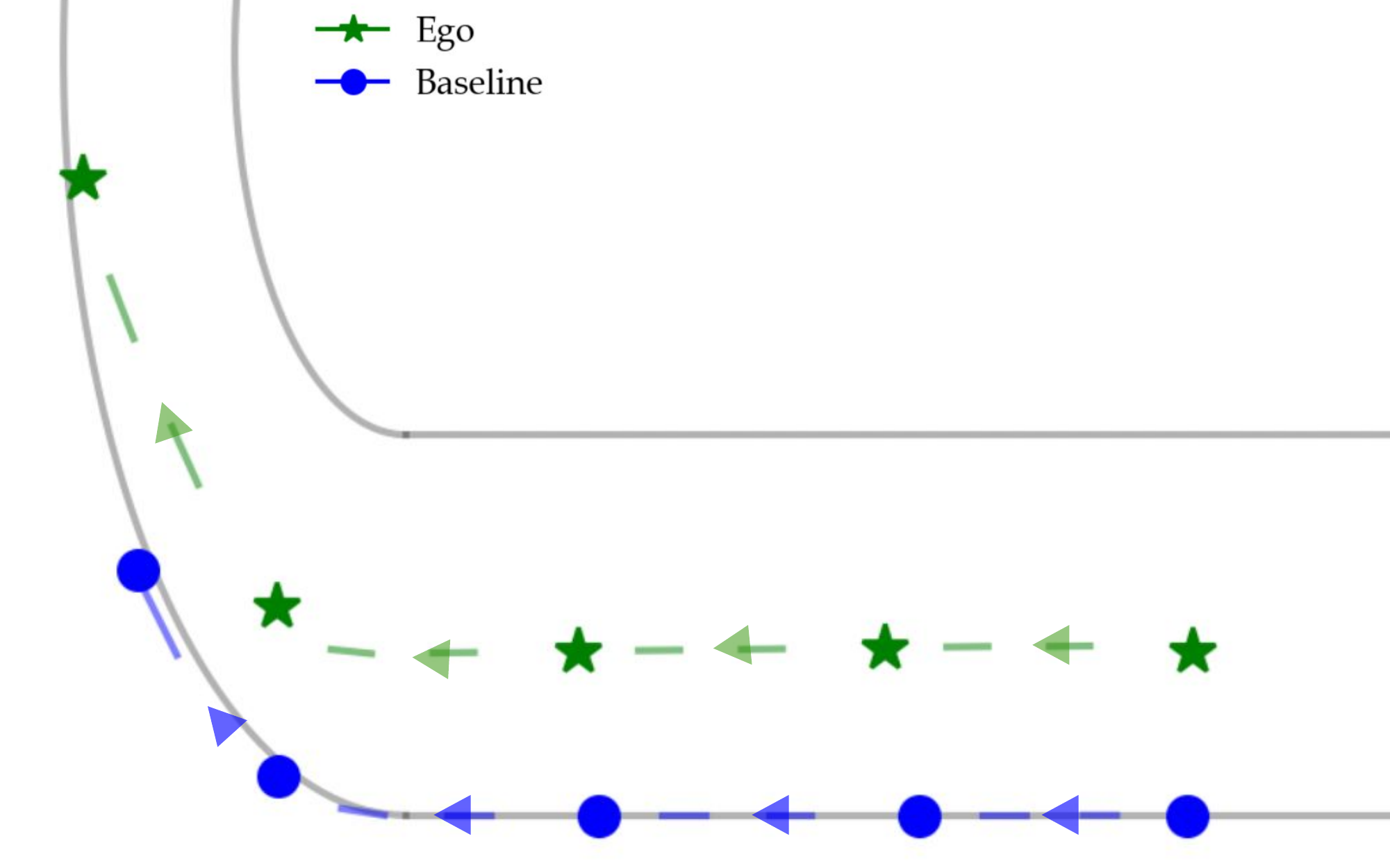}
    \caption{The ego (plotted with green color) pushes the baseline (plotted with blue color) against the boundary.}
    \label{hard_track_push}
\end{figure}

\begin{table}[h]
\begin{center}
\begin{tabular}{|c|c|c|}
\hline
 & {\bf Average Solve Time (s)} & {\bf Win} \\
\hline
  {\bf Ego} & $0.051 \pm 0.010$  & 36 \\
 \hline
  {\bf Baseline} & $0.215 \pm 0.084$ & 14 \\ 
\hline
\end{tabular}
\end{center}
\caption{Simulation Results for the Customized Track}
\label{hard_track_sim_stats}
\end{table}

To test the scalability of our approach, we ran our algorithm and the baseline algorithm against the reactive MPC algorithm, respectively, where the reactive MPC algorithm is described in Equation \eqref{standard_optm_problem}. We ran simulations for various numbers of agents, where one of them used our algorithm or the baseline algorithm, while the others used the reactive MPC algorithm. The results are recorded in Table \ref{solve_time_diff_num_agents}. 
As indicated by the results, our algorithm is able to scale better than the baseline.


\begin{table}[h]
\begin{center}
\begin{tabular}{|p{1.2cm}|c|c|c|}
\hline
 {\bf Number of Agents} & {\bf 2} & {\bf 5}  & {\bf 10}\\
\hline
  {\bf RAPID} & $\mathbf{0.078 \pm 0.046}$ &  $\mathbf{0.229 \pm 0.535}$ & $\mathbf{0.548 \pm 0.400}$ \\
\hline
  {\bf Baseline} & $0.217 \pm 0.061$ & $0.637 \pm 0.110$  &  $1.456 \pm 0.187$\\
  \hline
\end{tabular}
\end{center}
\caption{Mean Solve Time (in seconds) Comparisons for Different Numbers of Agents.}
\label{solve_time_diff_num_agents}
\vspace{0.2cm}
\end{table}



\vspace{-0.3cm}
\section{Hardware Experiments}
To further evaluate the real-time capabilities of our framework, we set up experiments in hardware using Crazyflie
2.0 quadcopters within the Robot Operating System (ROS) framework~\cite{quigley2009ros}. We used a
Vicon motion capture system to collect the state information of the agents. We also used Crazyswarm \cite{crazyswarm} to send waypoint commands to the Crazyflies. We then used a low-level controller to follow the waypoints.

The state vector of each quadrotor $i \in \left[N\right]$, at each time step $t$, is:
\begin{align}
    x^i_t &= [p^i_{x, t}, ~p^i_{y, t}, ~p^i_{z, t}, ~\phi^i_{t}, ~\theta^i_t, ~\psi^i_t],
\end{align}
which consists of its position ($[p^i_{x, t}, ~p^i_{y, t}, ~p^i_{z, t}]$) and orientation ($[\phi^i_{t}, ~\theta^i_t, ~\psi^i_t]$). We model the quadrotor dynamics as given by $\dot{x}^i_t = u^i_t$, where $u^i_t = [u^i_{1, t}, ~u^i_{2, t}, ~u^i_{3, t}, ~u^i_{4, t}, ~u^i_{5, t}, ~u^i_{6, t}]$ is the control input. The first three entries of $u^i_k$ correspond to the linear velocities and the last three entries correspond to the angular velocities. 
We use this simplified model since we can send waypoints directly as commands to the quadcopters through the Crazyswarm framework \cite{crazyswarm}, and an in-place low-level controller will be used to track the waypoints.

We also implemented a model predictive controller (MPC) for comparison purposes. This MPC planner is in fact a reactive planner that treats other agents as obstacles that need to be avoided. We call this planner {\it reactive MPC}.
When running reactive MPC, each agent aims to maximize its progress along the track while satisfying input constraints, collision constraints, and track constraints. The reactive MPC predicts the position of other agents by assuming other agents move with constant velocity during each planning horizon, where the constant velocity comes from finite difference estimation. For each agent $l \in [N]$, the reactive MPC solves the following optimization problem: 
\begin{align}\label{standard_optm_problem}
      \underset{\gamma^{l} \in \Gamma^{l}}{\text{minimize}} \quad & -r(x^{l}_{T},T)\nonumber\\
     \text{subject to} \quad& u_{min} \leq \|u^{l}_t\|  \leq u_{max}, \forall i \in [N], \forall t \in \left[T\right]; \nonumber\\
     & d(x^{l}_t, x^j_t) > d_{min}, \forall j \in [N]\setminus \{l\}, \forall t \in \left[T\right]; \nonumber\\
     & h(x^{l}_t) \leq 0,  \quad \forall t \in \left[T\right]
\end{align}
\vspace{-0.25cm}
\subsection{Two-agent Race}
We begin by considering a race between two agents, where one agent uses our algorithm while the other uses the reactive MPC algorithm.
The track is shown in Fig \ref{fig:two_player_hardware_experiment}. The initial positions are chosen with some randomness. 
The quadrotor that starts ahead is given a maximum speed of $1.5$ m/s, and the one that starts behind is given a maximum speed of $1.8$ m/s.

The results are displayed in Table \ref{two_player_hardware_stats}, where {\it Overtakings} records the number of times where the ego agents started behind but reached the finish line first and {\it Defendings} records the number of times where it started ahead and reached the finish line first. 
As shown by the results, the ego agent significantly outperforms the reactive MPC by winning most of the races.
Fig \ref{fig:two_player_hardware_experiment} shows both agents' trajectories in two of the hardware experiments.
As we can see, the ego quadrotor is able to overtake the opponent (the quadrotor that uses reactive MPC) when it starts behind and then block the opponent when the opponent has a possibility to overtake. In both cases, the ego is able to secure a leading position without causing any collision.
This also demonstrates that our algorithm is able to exploit the trade-off between making progress and blocking opponents without causing constraint violations.

\begin{table}
\begin{center}
\begin{tabular}{|c||c|}
\hline
{\bf Overtakings} & {\bf Defendings}\\
\hline
10/10 & 8/10\\
\hline
\end{tabular}
\end{center}
\vspace{-0.3cm}
\caption{Results for Two-agent Hardware Experiments. Our method was able to overtake 10 times out of the 10 trials when it started behind. And it was able to defend its leading position 8 times out of the 10 trials when it started ahead.}
\label{two_player_hardware_stats}
\vspace{-0.6cm}
\end{table}

\vspace{-0.1cm}
\subsection{Three-agent Race}

We also test our algorithm with a more complicated setting where there are three agents and the track becomes narrower. One of the agents uses our algorithm, and the other two use the reactive MPC algorithm.

The track is similar to the one used in the two-agent race but is made to be narrower in order to test our algorithm in a more competitive scenario.
The initial conditions are selected such that the quadrotors engage in interactions and have enough space for maneuvering. The speed limit for the leading quadrotor was set to be $1.5 m/s$, the middle quadrotor $1.65 m/s$, and the quadrotor behind had a speed limit of $1.8 m/s$.

We performed the experiments 15 times where the ego quadrotor starts first 5 times, starts second 5 times, and starts third 5 times. 
The results are recorded in table \ref{three_player_hardware_stats}. As we can see, the ego was able to win most of the races, even when it starts behind or starts ahead but given a lower speed limit. 
We observe that our algorithm is able to take advantage of the collision avoidance constraints of other agents and generate overtaking or blocking, and it is doing so without causing any collision.
Fig \ref{fig:three_player_experiment} shows an instance of such a blocking maneuver during a three-agent race. As we can see, the ego (represented by green color)
intentionally moves toward agent 1 (represented by blue color) to block it. As a result, the ego is able to make more progress than agent 1 despite being given a lower speed limit. Note that the experimental snapshots are not taken from a top view, so the relative position of agents may look different from the top view plots.
For more details on the hardware experiments of three drones, we recommend that interested readers check out the videos.

\begin{table}[h]
\centering
\begin{center}
\begin{tabular}{|c||c||c||c|}
\hline
 & {\bf Started 1st} & {\bf Started 2nd} & {\bf Started 3rd}\\
 \hline
{\bf Ego Finished 1st} & 3/5 & 4/5 & 4/5\\
\hline
\end{tabular}
\end{center}
\vspace{-0.2cm}
\caption{Results for Three-agents Hardware Experiments}
\label{three_player_hardware_stats}
\end{table}

\vspace{-0.3cm}
\section{Conclusion and Future Work}
\vspace{-0cm}
In this work, we presented RAPID, an efficient algorithm for trajectory planning in autonomous racing. Our method exploits the special cost structure under which the race can be formulated as a potential game, which enables us to obtain a generalized Nash equilibrium of the game by solving a single constrained optimal control problem. 
We demonstrated the performance of the algorithm through both simulation studies and hardware experiments. We showed that RAPID could generate delicate interactive maneuvers such as overtaking and blocking while avoiding collisions with other agents. 
Currently, some parameters such as the active distance threshold and $\alpha_{inactive}, \alpha_{active}$ in Algorithm-\ref{alg:ego} used in RAPID need to be determined empirically. In future work, we wish to provide theoretical results on the effects of these parameters. 









%
%
%

\bibliography{IEEEabrv,bibliography}
\bibliographystyle{IEEEtran}

\appendix
\subsection{Proof of Theorem \ref{thm:1}}\label{appdix:pf1}
\begin{proof}
Since agents' dynamics are separable, the state and control input of each agent can be written as a function of their strategies. We will explicitly denote this dependence in this proof. Consider two sets of open-loop strategies $\gamma := (\gamma^{i^{\top}},\gamma^{{-i}^\top})^\top$ and  $\gamma^\prime := (\nu^{i^{\top}},\gamma^{{-i}^\top})^\top$. Let the states and control inputs generated by the set of strategies $\gamma$ be $\{x_t\}_{ t\in \{0,1,\ldots,T\}}$ and $\{u_t\}_{ t\in \{0,1,\ldots,T-1\}}$ respectively. Likewise, let the states and control inputs generated by the set of open-loop strategies $\gamma^\prime$ be $\{x^\prime_t\}_{ t\in \{0,1,\ldots,T\}}$ and $\{u^\prime_t\}_{ t\in \{0,1,\ldots,T-1\}}$ respectively. Note that since between $\gamma$ and $\gamma^\prime$, only the strategy of agent $i$ is being changed, all the other agents' states and actions will be the same when using strategies $\gamma$ and $\gamma^\prime$. Therefore, we have the following relation:
\begin{equation}\label{eq:relation}
        {x^\prime}^j_t = x^j_t, {u^\prime}^j_t = u^j_t \quad \forall j \neq i, j \in [N].
    \end{equation}

Now, we can compute the difference of the cost for agent $i$ as
    \begin{align}
        & J^i(x_0,\gamma) - J^i(x_0,\gamma^\prime) = -r(x^i_T) + r({x_T^\prime}^i) \nonumber \\
        & + \alpha \sum_{t=0}^{T-1} \sum_{\substack{j \in [N] \\ j \neq i}} \left(d(x^i_t,x^j_t)^2 - d(x^{\prime^i}_t , {x^\prime}^j_t)^2\right).
    \end{align}
Using results from \eqref{eq:relation}, since we have that ${x^\prime}^j_t = x^j_t, {u^\prime}^j_t = u^j_t$, for all $j \neq i$, we obtain
\begin{align}
        & J^i(x_0,\gamma) - J^i(x_0,\gamma^\prime) = -r(x^i_T) + r(x^{\prime^i}_T) \nonumber \\
        & \quad + \alpha \sum_{t=0}^{T-1} \sum_{\substack{j \in [N] \\ j \neq i}} \left(d(x^i_t, x^j_t)^2 - d(x^{\prime^i}_t, {x}^j_t)^2\right).
\end{align}

Now, let's consider the change in the potential function~\eqref{eq:potential-function} when the strategy of agent $i$ changes from $\gamma^i$ to $\nu^i$:
\begin{align}\label{eq:change_potential}
    & P(x_0,\gamma) - P(x_0,\gamma^\prime) = -\sum_{k=1}^{N}\left(r(x^k_T) - r(x^{\prime^k}_T)\right) \nonumber \\
    &  \quad + \alpha \sum_{t=0}^{T-1} \sum_{k=1}^{N}\sum_{l=k+1}^{N} \left(d(x^k_t, x^l_t)^2 - d({x^\prime}^k_t, {x^\prime}^l_t)^2\right).
\end{align}
Using \eqref{eq:relation}, we have:
\begin{align}\label{eq:first_term}
    & \sum_{k=1}^{N}\left(r(x^k_T) - r({x^\prime}^k_T )\right) = r(x^i_T) - r(x^{\prime^i}_T),
\end{align}
and similarly, using~\eqref{eq:relation},
\begin{align}\label{eq:second_term}
    & \sum_{k=1}^{N}\sum_{l=k+1}^{N} \left(d(x^k_t, x^l_t)^2 - d({x^\prime}^k_t, {x^\prime}^l_t)^2\right) \nonumber \\
    & \quad = \sum_{k=1}^{i}\sum_{l=k+1}^{N}\left(d(x^k_t, x^l_t)^2 - d({x^\prime}^k_t, {x^\prime}^l_t)^2\right) \nonumber \\
    & \quad + \sum_{k=i+1}^{N}\sum_{l=k+1}^{N}\left(d(x^k_t, x^l_t)^2 - d({x}^k_t, {x}^l_t)^2\right) 
\end{align}
As we can see, second term in \eqref{eq:second_term} is zero. In the first term, for each value of $0<k<i$, all the terms corresponding to all the values of $l$ except $l=i$ will be zero due to \eqref{eq:relation}. Therefore,
\begin{align}\label{eq:third_term}
    & \sum_{k=1}^{N}\sum_{l=k+1}^{N} \left(d(x^k_t, x^l_t)^2 - d({x^\prime}^k_t, {x^\prime}^l_t)^2\right) \nonumber \\
    & \quad = \sum_{k=1}^{i-1}\left(d(x^k_t, x^i_t)^2 - d({x}^k_t, {x^\prime}^i_t)^2\right) \nonumber \\
    & \quad + \sum_{l=i+1}^{N}\left(d(x^i_t, x^l_t)^2 - d({x^\prime}^i_t, {x}^l_t)^2\right) \nonumber \\
     & \quad = \sum_{\substack{j \in [N] \\ j \neq i}} \left(d(x^i_t, x^j_t)^2 - d(x^{\prime^i}_t, {x}^j_t)^2\right)
\end{align}
Putting together~\eqref{eq:first_term} and \eqref{eq:third_term} in \eqref{eq:change_potential}, we obtain

\begin{align}
        & P(x_0,\gamma) - P(x_0,\gamma^\prime) = -r(x^i_T) + r(x^{\prime^i}_T) \nonumber \\
        & \quad + \alpha \sum_{t=0}^{T-1} \sum_{\substack{j \in [N] \\ j \neq i}} \left(d(x^i_t, x^j_t)^2 - d(x^{\prime^i}_t, {x}^j_t)^2\right). \nonumber \\
        \Rightarrow & P(x_0,\gamma) - P(x_0,\gamma^\prime) = J^i(x_0,\gamma) - J^i(x_0,\gamma^\prime), \forall i \in [N].
\end{align}

Therefore, $P$ is indeed the potential function for the underlying dynamic game using Definition~\ref{def:potential}. Therefore using results from Theorem 1 in \cite{bhatt2022efficient}, we can compute generalized open-loop Nash equilibria of the original game by solving the multivariate optimal control problem given in \eqref{eq:potential_game}.

\end{proof}

\end{document}